\renewcommand\footnotetextcopyrightpermission[1]{} 
\algrenewcommand\algorithmicrequire{\textbf{Input:}}
\algrenewcommand\algorithmicensure{\textbf{Output:}}
\newtheorem{theorem}{Theorem}[section]
\newtheorem{lemma}[theorem]{Lemma}
\def\url@leostyle{%
  \@ifundefined{selectfont}{\def\UrlFont{\small}}%
  {\def\UrlFont{}}%
}
\renewcommand{\footnoterule}{%
  \kern -3pt
  \hrule width 1in 
  \kern 2pt
}
\definecolor{darkgreen}{RGB}{135,0,40}
\newcommand{\descr}[1]{\smallskip\noindent\textbf{#1}}
\newif\ifcomment
	\newcommand{\edc}[1]{\textbf{\em\color{red}#1}}
    \newcommand\edc[1]{}
	\newcommand{\bo}[1]{\textbf{\em\color{blue}#1}}
    \newcommand\bo[1]{}
	\newcommand{\cd}[1]{\textbf{\em\color{green}#1}}
    \newcommand\cd[1]{}
\begin{document}
\fancyhead{}

\author{Bristena Oprisanu}
\affiliation{%
  \institution{Computer Science Department\\University College London}}
  \email{bristena.oprisanu.10@ucl.ac.uk}
  \author{Christophe Dessimoz}
\affiliation{%
  \institution{University College London and University of Lausanne}}
  \email{c.dessimoz@ucl.ac.uk}

\author{Emiliano De Cristofaro}
\affiliation{%
  \institution{University College London and\\Alan Turing Institute}}
  \email{e.decristofaro@ucl.ac.uk}
\sloppy

  \title{How Much Does GenoGuard Really ``Guard''? An Empirical Analysis of Long-Term Security for Genomic Data\footnotemark}

\begin{abstract}
Due to its hereditary nature, genomic data is not only linked to its owner but to that of close relatives as well. As a result, its sensitivity does not really degrade over time; in fact, the relevance of a genomic sequence is likely to be longer than the security provided by encryption. This prompts the need for specialized techniques providing \emph{long-term} security for genomic data, yet the only available tool for this purpose is GenoGuard~\cite{huang_genoguard:_2015}. By relying on {\em Honey Encryption}, GenoGuard is secure against an adversary that can brute force all possible keys; i.e., whenever an attacker tries to decrypt using an incorrect password, she will obtain an incorrect but plausible looking decoy sequence.

In this paper, we set to analyze the real-world security guarantees provided by GenoGuard; specifically, assess how much more information does access to a ciphertext encrypted using GenoGuard yield, compared to one that was not. Overall, we find that, if the adversary has access to side information in the form of partial information from the target sequence, the use of GenoGuard does appreciably increase her power in determining the rest of the sequence. We show that, in the case of a sequence encrypted using an easily guessable (low-entropy) password, the adversary is able to rule out most decoy sequences, and obtain the target sequence with just 2.5\% of it available as side information. In the case of a harder-to-guess (high-entropy) password, we show that the adversary still obtains, on average, better accuracy in guessing the rest of the target sequences than using state-of-the-art genomic sequence inference methods, obtaining up to 15\% improvement in accuracy.
\end{abstract}

\settopmatter{printacmref=false}

\maketitle

\renewcommand{\thefootnote}{\fnsymbol{footnote}}
\footnotetext[1]{\em To appear in the Proceedings of the 18th ACM CCS Workshop on Privacy in the Electronic Society (WPES 2019).}

\renewcommand*{\thefootnote}{\arabic{footnote}}

\section{Introduction}\label{sec:intro}
Over the past two decades, the cost of sequencing the human genome -- i.e., determining a person's complete DNA sequence -- has plummeted from millions to thousands of dollars, and continues to drop~\cite{genome2017org}. 
As a result, sequencing has not only become routine in biology and biomedics research, but is also increasingly used in clinical contexts, %
with treatments tailored to the patient's genetic makeup~\cite{ashley2016towards}. 
At the same time, the ``direct-to-consumer'' genetic testing market is booming~\cite{adoption} with companies like 23andMe and AncestryDNA attracting millions of customers, and providing them with easy access to reports on their ancestry or genetic predisposition to health-related conditions.
Progress and investments in genomics have also enabled public initiatives to gather genomic data for research purposes. 
For instance, in 2015, the US launched the ``All of Us'' program~\cite{allofus2017}, which aims to sequence one million people, while, in the UK, Genomics England is sequencing the genomes of 100,000 patients with rare diseases or cancer~\cite{genomicsengland}.

Alas, as more and more genomic data is generated, collected, and shared, serious privacy, security, and ethical concerns also become increasingly relevant.
The genome contains very sensitive information related to, e.g., ethnic heritage, disease predispositions, and other phenotypic traits~\cite{ayday2013chills}.
Furthermore, even though most published genomes have been anonymized, previous work has shown that anonymization does not provide an effective safeguard for genomic data~\cite{gymrek_identifying_2013}. 
While some individuals choose to donate their genome to science, or even publicly share it%
~\cite{pgp}, others might be concerned about their privacy, or fear discrimination by employers, government agencies, insurance providers, etc.~\cite{burns_gop_nodate}. %

Worse yet, consequences of genomic data disclosure are not limited in time or to the data owner: due to its hereditary nature, access to one's sequenced genome inherently implies access to many features that are relevant to their progeny and their close relatives. 
A case in point is the story of Henrietta Lacks, a patient who died of cancer in 1951. 
Some of her cancerous cells revealed to be useful for research because of their ability to keep on dividing. 
Unbeknownst to her family, the cells became the most commonly used ``immortal cell line,'' and their genome was eventually sequenced and published~\cite{landry2013genomic}. 
This prompted serious privacy concerns among her family members, even 60 years later~\cite{callaway2013hela}. 

Motivated by these challenges, the research community has produced a large body of work aiming to protect genomic privacy and enable privacy-preserving sharing and testing of human genomes~\cite{sok}.
Available solutions mostly rely on cryptographic tools, including encryption as well as Secure Computation, Homomorphic Encryption, Oblivious RAM, etc.~\cite{aziz2017privacy}.
However, modern encryption algorithms provide security guarantees only against computationally bounded adversary; essentially, their security is assumed to last for 30 to 50 years~\cite{enisa}.
While this timeframe is acceptable for most uses of encryption, it is not for genomic data.

To address the problem of ``long-term security,'' Huang et al.~\cite{huang_genoguard:_2015} introduce GenoGuard, a tool based on Honey Encryption (HE)~\cite{HE2} to provide confidentiality of genomic data even in the presence of an adversary who can brute force all possible encryption keys.
GenoGuard uses a distribution transforming encoder (DTE) together with symmetric (password-based) encryption. In essence, whenever an attacker would try to decrypt a GenoGuard ciphertext using a wrong password, the decryption will give a wrong but plausible looking plaintext, which we denote as a {\em honey sequence.}

HE schemes based on DTE-then-encrypt constructions (as is the case for GenoGuard) only provide security in the message recovery context.
That is, having access to the ciphertext only gives an unbounded adversary a negligible advantage in guessing the correct plaintext.
However, as first discussed by Jaeger et al.~\cite{jaeger2016honey}, ciphertexts obtained from DTE-then-encrypt HE might still leak a significant amount of information about the plaintexts.

\descr{Technical Roadmap.} We evaluate GenoGuard security by analyzing ciphertexts obtained using easily guessable (low-entropy) passwords as well as hard (high-entropy) ones. 
In other words, in both cases, we decrypt a GenoGuard ciphertext using a corpus of passwords and analyze the resulting decryptions (honey sequences).
In the low-entropy setting, we consider an adversary who aims to identify the correct sequence among a pool of honey sequences, whereas, in the high-entropy case, one that uses the GenoGuard ciphertext in order to obtain more information about the target sequence as opposed to inference methods for genomic data.

In our experimental evaluation, we show that, under a low-entropy password setting, an adversary who has access to side information about the target sequence can quickly eliminate the decoy sequences in order to have an increased advantage of guessing the correct sequence. This draws attention to the fact that if the attacker obtains a list of known passwords for a user (as passwords are often compromised and/or re-used), together with some side information about the user's sequence, she can have a significant advantage in guessing the correct sequence.

In the high-entropy setting, not only we observe that access to the GenoGuard ciphertext improves an adversary's accuracy in guessing SNVs from a target sequence when 10\% or less of the target sequence is available to her as side information, but also draws attention to the fact that with enough side information, %
the adversary can predict a significant part of the target genome just by using state of the art inference methods for genomic sequences.

\descr{Contributions.} 
In summary, our paper makes two main contributions.
First, under a low-entropy password setting, we formally show that, if the adversary obtains side information about the target sequence, there is a significant lower bound in her advantage. This highlights that the system offers low security when the adversary has access to side information, as supported by empirical evidence.
Second, in the high-entropy password setting, we quantify the privacy loss for a user as a result of using GenoGuard, compared to the best inference methods for genomic data; once again, showing that that it is non-negligible.

\descr{Paper Organization.} The rest of the paper is organized as follows. The next section reviews notions used throughout the paper, then, in Section~\ref{sec:genoguard}, we introduce GenoGuard. 
Section~\ref{sec:evaluation} presents our evaluation methodology for low and high-entropy settings, %
while Section~\ref{sec:results} reports our experimental results.
Finally, after reviewing related work in Section~\ref{sec:related work}, the paper concludes in Section~\ref{sec:conc}.

\section{Preliminaries}\label{sec:preliminaries}
This section provides some relevant background information used throughout the paper.

\subsection{Genomics Primer}\label{sec:genomics}

\descr{Genome.} In the nucleus of an organism's cell, double stranded deoxyribonucleic acid (DNA) molecules are packaged into thread-like structures called chromosomes. 
DNA molecules consist of two long and complementary polymer chains of four units called nucleotides, described with the letters A, C, G, and T. 
All chromosomes together make up the {\em genome}, which represents the entirety of the organism's hereditary information; in humans, the genome includes 3.2 billion nucleotides.
A {\em gene} is a particular region of the genome that contain the information to produce functional molecules, in particular proteins.
For instance, the BRCA2~\cite{yoshida_role_2004} is a human tumor suppressor gene (it encodes a protein responsible for repairing the DNA),
and a mutation in that gene increases significantly the risk for breast cancer~\cite{friedenson2007brca1}.
{\em Alleles} are the different versions of genes, as organisms inherit two alleles for each gene, one from each parent. 
The set of genes is also called the {\em genotype.} 
Finally, the {\em haplotype} is a group of alleles in an organisms that are inherited together from a single parent \cite{clarke_disentangling_2005}.

\descr{SNPs and SNVs.} Humans share about 99.5\% of the genome, while the rest differs due to genetic variations. 
The most common type of variants are Single Nucleotide Polymorphisms (SNPs)~\cite{reference_what_nodate}, which occur at a single position and in at least 1\% of the population. 
More generally, variants at specific positions of a genome are referred to as Single-Nucleotide Variants (SNVs); they may be due to SNPs, to rare variants in the population, or to new mutations. 
Typically, SNPs and SNVs are encoded with a value in $\{0,1,2\}$, with $0$ denoting the most common variant (allele) in the population, and $1$ and $2$ denoting alternative alleles.

\descr{Allele Frequency (AF).} The frequency of an allele at a certain position in a given population is known as Allele Frequency (AF).
More specifically, it is the ratio of the number of times the allele appears in the population over the total number of copies of the gene.
In a nutshell, it shows the genetic diversity of a species' population. 

\descr{Linkage Disequilibrium (LD).} LD refers to the non-random association of alleles at two or more positions in the general population, defined as the difference between the frequency of a particular combination of alleles at different positions and the one expected by random association. 

\descr{Recombination Rate (RR).} The process of determining the frequency with which characteristics are inherited together is known as recombination. 
This is due to two chromosomes of similar composition coming together and performing a molecular crossover, thus, exchanging the genetic content. 
Because recombination can occur with small probability at any location along the chromosome, the frequency of recombination between two locations depends on the distance separating them. 
Therefore, for genes sufficiently distant on the same chromosome, the amount of crossover is high enough to destroy the correlation between alleles~\cite{li_modeling_2003}. 
The recombination rate (RR), as defined in \cite{philips__nodate}, is the probability that a transmitted haplotype constitutes a new combination of alleles different from that of either parental haplotype. An example of how a haplotype is created by copying parts from the other haplotypes is illustrated in Figure~\ref{fig:recomb}. 
 
\begin{figure}[t]
    \centering
    \includegraphics[width=0.735\columnwidth]{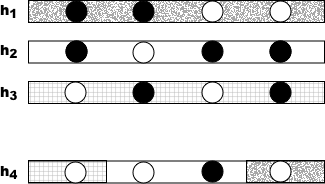}
    \caption{An example of a haplotype, $h_4$, built as an imperfect mosaic from $h_1, h_2, h_3$. $h_4$ is created by (imperfectly) ``copying'' parts from $h_1, h_2$, and $h_3$. Each column of circles represents a SNP locus, with the black and white circles denoting the two alleles -- major and minor. (Adapted from~\cite{li_modeling_2003}).}
    \label{fig:recomb}
\end{figure} 
\subsection{Markov Chains}

A Markov chain is a probabilistic model encoding a sequence of possible events: the probability of each one of them depends only on the state attained in the previous event~\cite{MarCha}.

In the context of genomes, a Markov chain can represent a series of SNVs ordered by their positions. 
In particular, a $k$-th order Markov chain, on genome sequences, can be used to encode a set of SNVs, where the value of each SNV$_i$ depends on the values of the $k$ preceding ones:\vspace*{-0.1cm}
\begin{equation}
\Pr(SNV_i) = \Pr(SNV_i|SNV_{i-1},\hdots,SNV_{i-k})
\end{equation}

\subsection{SNV Correlation Modeling}\label{sec:correlation}
In order to model correlations between SNVs, and perform sequence inference (i.e. predicting the values of SNVs from a sequence), one can use a few different approaches 
(for more details on various SNV correlations, please refer to~\cite{samani_quantifying_2015}).
We choose three models; see next.

\descr{Most likely genotype.} First, we use a model based on the 1st order Markov chain model from AF and LD. Given allele frequencies (AF) and linkage disequilibrium (LD), we predict each SNV using the highest conditional probability of the SNV occurring.
For each SNV, the joint probability matrix is computed taking into consideration the LD with previous one and the AF. 
If a SNV is not in LD with the previous one, the probability is computed using only the allele frequency. 
When this model is used for inference, the highest value from the joint probability matrix or the highest probability given by the AF is chosen to predict the specific SNV.

\descr{Sampled genotype.} The second model is built from the 1st order Markov chain model from AF and LD.
For this model, the conditional probabilities are computed in a similar way as in the most likely genotype model.
The main difference is in the choice of the value of the SNV, given the three computed probabilities for major homozygous $\Pr_0$, heterozygous $\Pr_1$, and minor homozygous allele $\Pr_2$. 
A seed $s$ is chosen uniformly at random from the interval $[0,1)$. If $s<\Pr_0$, then choose the SNV to be major homozygous; if $\Pr_0 \leq s<\Pr_1 +\Pr_0$, then the SNV is heterozygous; and minor homozygous otherwise.

\descr{RR Model.} This is a high-order correlation model that relates LD patterns to the underlying recombination rate~\cite{Li2213}. 
Given a set of $n$ sampled haplotypes, $\{h_1, h_2,...,h_n\}$, the model relates their distribution to the underlying recombination rate. 
Given the recombination parameter, $\rho$, we have:
\begin{equation}
\begin{split}
&\Pr(h_1,...,h_n|\rho) =\\
&=\Pr(h_1|\rho)\cdot\Pr(h_2|h_1;\rho)\cdot\ldots\cdot\Pr(h_n|h_1,\hdots ,h_{n-1};\rho)
\end{split}
\end{equation}
We use this model to determine the value of a SNP at a given position. 
At each SNP, $h_k$ is a possibly imperfect copy of one of $h_1,...,h_{k-1}$. Let $H_i$ denote which haplotype is copied at a position $i$. 
For instance, in the example presented in Figure~\ref{fig:recomb}, for $h_4$, we have $(H_1,H_2,H_3, H_4) = (3,2,2,1)$. 
For a generic $h_k$, each $H_i$ can be modeled as a Markov chain on $\{1,\hdots, k-1\}$. 
Assuming that one part of $h_{k}$ comes from $h_i$, the next adjacent part can be copied from any of the $k-1$ haplotypes, and the probability depends on the recombination rates between these two parts. 
Overall, the probability of a particular haploid genotype $h_{k}$ can be computed as the sum over all possible event sequences of recombination and mutation that could lead to $h_{k}$. 
Let $h_{i,j+1}$ denote the allele found at position $j+1$ in haplotype $i$, and $h_{i,\leq j}$ denote the values of the first $j$ positions of haplotype $i$ (i.e. the prefix sequence of $h_{i,j+1}$). 
Then, we can compute the conditional probability of an allele $h_{k,j+1}$, given all preceding alleles as:
\begin{equation}
\Pr(h_{k,j+1}| h_{k,j}, \hdots h_{k,1}) = \frac{\Pr(h_{k,\leq j+1})}{\Pr(h_{k,\leq j)}}
\end{equation}

\subsection{Honey Encryption}\label{sec:he}
Honey Encryption (HE)~\cite{HE2} %
is a cryptographic primitive used to provide confidentiality guarantees in the presence of possible brute-force attacks.
It is a variant of Password-based Encryption (PBE), in that it also uses an arbitrary string (password) to perform randomized encryption of a plaintext. 
Its main property is that all decryptions of a ciphertext will yield a plausible-looking plaintext, which is thus indistinguishable from the correct one. 

The main building block of HE is the Distribution-Transforming Encoder (DTE).
A DTE is a randomized encoding scheme {\fontfamily{cmtt}\selectfont(encode,\\ decode)} tailored on the target distribution.
The {\fontfamily{cmtt}\selectfont encode} algorithm takes as input a message $M$ from the message space $\mathcal{M}$, and outputs a value $S$ in a set $\mathcal{S}$, i.e., the seed space. 
Whereas, {\fontfamily{cmtt}\selectfont decode} takes a seed $S \in \mathcal{S}$ and outputs a message $M \in \mathcal{M}$. 
A DTE scheme is {\em correct} if, for any $M \in \mathcal{M}$, $\Pr[${\fontfamily{cmtt}\selectfont decode$($encode$(M))$}$=M]=1$. 
The DTE-then-encrypt scheme presented in~\cite{HE2} applies {\fontfamily{cmtt}\selectfont encode} to a message, and then performs encryption using a secure symmetric encryption scheme (e.g., AES).
Similarly, to decrypt a ciphertext, one first decrypts using the underlying cipher (e.g., AES), and then applies the {\fontfamily{cmtt}\selectfont decode} algorithm.

\descr{Terminology.} In the rest of the paper, to denote sequences decrypted from GenoGuard, we use the term \emph{honey sequences}.

\begin{figure*}[t]
    \centering
    \includegraphics[width=0.9\textwidth]{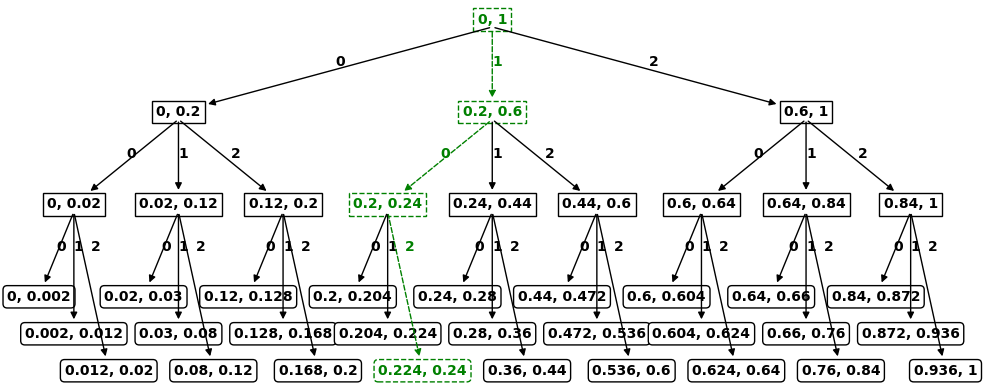}
    \vspace{-0.2cm}
    \caption{Toy example describing the encoding process for a sequence $(1, 0, 2)$. The green dashed line represents the correct encoding of the sequence. When the final leaf (interval $[0.224, 0.24]$) is reached, a seed is picked at random from this range. }
    \label{fig:tree}
\end{figure*} 

\section{GenoGuard}\label{sec:genoguard}
In this section, we review GenoGuard~\cite{huang_genoguard:_2015}, along with a security analysis of the framework.%

\subsection{Construction}\label{subsec:ggdesc}
GenoGuard is a framework %
providing long-term confidentiality for genomic data based on Honey Encryption~\cite{HE2}.
More specifically, it  allows to encode genomic data, encrypt it using a secret password, and store in a database, in such a way that its confidentiality is preserved even against an attacker that can brute-force all possible passwords.
In GenoGuard, genomes are represented as a sequence of single-nucleotide variants (SNVs), i.e., values in $\{0,1,2\}$.

\descr{Encoding.} The construction uses a DTE scheme optimized for genome sequences. 
It assigns subspaces of seed space $\mathcal{S}$ to the prefixes of a sequence $M$, i.e., all the subsequences in the set $\{M_{1,i}| 1\leq i \leq n\}$, where $n$ is the length of the sequence. 
For example, the prefixes of the sequence $01102$ are $\{0, 01, 011, 0110, 01102\}$. 
The seed space $\mathcal{S}$ is the interval $[0,1)$, with each seed being a real number in this interval. 

Let $\mathcal{M}$ be the set of all possible sequences (the plaintext space).
To calculate the cumulative distribution function (CDF) of each sequence, a total order $\mathcal{O}$ is assigned to all sequences in $\mathcal{M}$. 
For any two different sequences $M$ and $M'$, we assume that they start to differ at SNV$_i$ and SNV$'_i$. %
If the value of SNV$_i$ is smaller than that of SNV'$_i$, then, $\mathcal{O}(M)< \mathcal{O}(M')$, and $\mathcal{O}(M)> \mathcal{O}(M')$ otherwise. 
The CDF of a sequence $M$ is then calculated as:
\begin{center}
$CDF(M)=\sum_{M' \in \mathcal{M},\mathcal{O}(M')\leq \mathcal{O}(M)}  \Pr_{SNV}(M')$
\end{center}
where $\Pr_{SNV}(M')$ is the probability of the sequence $M'$.

The encoding of a sequence can be performed using a perfect ternary tree, as depicted in Figure~\ref{fig:tree}. 
(Note that the plot was generated using code obtained from GenoGuard's Github page.\footnote{\url{https://github.com/acs6610987/GenoGuard}})
Each node in the tree represents a prefix of a sequence, and each leaf a complete sequence. 
Nodes have an interval $[L_i^j, U_i^j)$, where $i$ is the depth of the node in the tree and $j$ its order at a given depth $i$. 
The first node has the interval $[L_0^0, U_0^0) = [0,1)$. 
Depending on the value of the SNV at position $i{+}1$, the encoding proceeds from the node that represents $M_{1,i}$ with order $j$ at depth $i$ to depth $i+1$ as follows:
\begin{itemize}
\item[$\bullet$] If SNV$_{i+1} = 0$, go to the left branch and attach an interval $[L_{i+1}^{3j},U_{i+1}^{3j}) = [L_i^j, L_i^j + (U_i^j-L_i^j)\times \Pr(SNV_{i+1}=0|M_{1,i}))$
\item[$\bullet$] If SNV$_{i+1}=1$, go to the middle branch and attach an interval $[L_{i+1}^{3j+1},U_{i+1}^{3j+1}) = [L_i^j + (U_i^j-L_i^j)\times \Pr(SNV_{i+1}=0|M_{1,i}), L_i^j + (U_i^j-L_i^j)\times( \Pr(SNV_{i+1}=0|M_{1,i} + \Pr(SNV_{i+1}=1|M_{1,i})))$
\item[$\bullet$] If SNV$_{i+1} = 2$, go to the right branch and attach an interval $[L_{i+1}^{3j+2},U_{i+1}^{3j+2}) = [L_i^j + (U_i^j-L_i^j)\times( \Pr(SNV_{i+1}=0|M_{1,i} + \Pr(SNV_{i+1}=1|M_{1,i})), U^j_i)$.
\end{itemize}
In order to compute the conditional probabilities, Huang et al.~\cite{huang_genoguard:_2015} consider several models and compare their goodness of fit for real-world genome datasets.
Specifically, they experiment with Linkage Disequilibrium (LD), Allele Frequencies (AF), building $k$-th order Markov chains on the dataset and recombination rates (RR), and find the latter to perform best.

Finally, when a leaf is reached, a seed is picked uniformly from this range as the encoding of the corresponding sequence, and then fed into a Password-based Encryption (PBE) scheme to perform encryption, using a password chosen by the user. %

\descr{Decoding.} To decode an encoded-then-encrypted sequence, the ciphertext is first decrypted (as per the PBE scheme) using the user-chosen password; this recovers the seed.
Then, the decoding process proceeds similar to the encoding one. 
That is, given the seed $S \in [0,1)$, at each step, the algorithm computes three intervals for the three branches, chooses the interval in which the seed $S$ falls, and moves down the tree. 
Once a leaf node is reached, the path from the root to the leaf is outputted as the decoded sequence.

\descr{Finite Precision.} Note that the Honey Encryption encoding model, as described in Section~\ref{subsec:ggdesc}, requires the seed space $\mathcal{S}$ to be a real number domain with infinite precision. 
In the case of DNA sequences, this would yield a very long floating-point representation, and thus a high storage overhead. 
Therefore, GenoGuard uses a modification of the DTE scheme for finite precision. %
Specifically, for a sequence of length $n$, where each SNV takes three possible values, at least $n \cdot \log_23$ bits are needed for storing the sequence.
Hence, a storage overhead parameter $h > \log_23$ is selected, and each sequence is encoded over $h\cdot n$ bits. 
The algorithm works as before, by selecting intervals according to the values of the respective SNVs based on conditional probabilities. The root interval is $[0,2^{hn}{-}1]$. 
At each branch at depth $i$, the algorithm will allocate a seed space of size $3^{n-i-1}$, and each following step will segment an input interval into three parts of equal size. 
Hence, any subinterval of the $j$-th node at depth $i$ will contain $3^{n-i-1}$ integers.

\subsection{Security}\label{sec:security}

Huang et al.~\cite{huang_genoguard:_2015} evaluate the security of GenoGuard vis-\`a-vis the probability of an unbounded adversary recovering the encrypted sequence. 
That is, given the encryption of a message, what is the probability of the adversary recovering the correct message, even if she can brute-force all possible encryption keys for the underlying PBE scheme?

\descr{Upper Bound.} More formally, they prove an upper bound to the probability an adversary recovers the correct message to be:
\begin{equation}\label{eq:secu}
Pr_{p_m,p_k} \leq w(1+\delta) + \frac{3^n +1/w}{2^{(h-\log_23)n}}
\end{equation}
where  $p_m$ is the original sequence distribution with maximum sequence probability $\gamma$, $p_k$ is a key (password) distribution with maximum weight $w$ (i.e., the most probable password has probability $w$), $n$ is the length of the sequence, $h$ the overhead parameter, and $\delta$ a parameter depending on $w$ and $\gamma$.

Let $\Delta$ denote the fraction $\frac{3^n +1/w}{2^{(h-\log_23)n}}$ in Equation~\ref{eq:secu}.
Note that $\Delta$ is a security loss term, since the upper bound on plaintext recovery probability should be $w$, as an adversary who trivially decrypts the ciphertext with the most probable key and outputs the result can recover the original message with probability $w$.
$\Delta$ is essentially the security lost due to DTE imperfectness when moving to finite precision, i.e., given by the difference between the original message distribution and the DTE distribution.
As shown in~\cite{huang_genoguard:_2015}, for $n = 20{,}000$, $h=4$, $w = \frac{1}{100}$, and $\gamma = 2.89\times 10^{-44}$, $\Delta$ is approximately $2^{-16600}$.

\descr{Empirical Evaluation.} Huang et al.~\cite{huang_genoguard:_2015} also present an {\em empirical} security analysis based on two experiments.
In both, the chromosome 22 of a victim is encrypted using a password pool consisting of numbers from 1 to 1000, with ``539'' assumed to be the correct one. 
Then, in order to rule out wrong passwords, the interval size of each of the decrypted sequences is computed.
In the first experiment, a genome is encoded by assuming a uniform distribution (i.e., each branch has weight $1/3$ at all depths), and a PBE scheme is used to encrypt the seed. %
In the second experiment, GenoGuard is used to encrypt the victim's sequence.
Hence, the size of the interval of a leaf in the ternary tree is proportional to the probability of the corresponding sequence. 
The results of their experiments, reported in Figure 10 in~\cite{huang_genoguard:_2015}, show that a simple classifier can distinguish the correct sequence in the first experiment, while, in the second one, it is ``buried'' among all the decrypted sequences. 

\section{Evaluation Methods}\label{sec:evaluation} 
We now describe our evaluation methods, for both low and high-entropy password settings. 
Before doing so, we introduce the notation used in the rest of the paper in Table~\ref{tab:notations}.

\subsection{Low-Entropy vs High-Entropy Password}
We use different approaches for evaluating GenoGuard under two different password types, namely low-entropy and high-entropy passwords. In other words, we encrypt a sequence with GenoGuard using either an easy to guess, low-entropy password ($\approx$7 bits), or using a harder password with a higher entropy ($\approx$72 bits). 

The difference in the evaluation of the two approaches is given by the adversary's goal. 
Specifically, in the low-entropy password case, the adversary attempts to use the side information in order to distinguish the original encrypted sequence among a pool of honey sequences.
By contrast, in the high-entropy setting, the adversary uses both the honey sequences and the side information in order to predict the value of each SNV at each position in the target sequence.

\begin{table}
\small
\begin{center}
\begin{tabular}{l|l}
{\bf Symbol} & {\bf Meaning}\\ \hline 
MR & Message recovery\\ %
SI & Side information \\ %
HEnc & Honey Encryption \\ %
HDec & Honey Decryption \\ %
$\mathcal{K}$ & Key space\\ %
$\mathcal{M}$ & Message space\\ %
$p_k$ & Key distribution\\ %
$p_m$ & Message distribution \\ %
$\adv$ & Adversary \\ %
$\adv^{SI}$ & Adversary with access to side information \\ 
\end{tabular}
\caption{Notation.}
\label{tab:notations}
\end{center}
\end{table}

\subsection{Threat Model}\label{subsec:threat-low}
We use the same system and threat model presented in the GenoGuard paper~\cite{huang_genoguard:_2015},
i.e., we assume a genomic sequence of a user is to be stored, encrypted, at a third-party database, e.g., a biobank.
We consider an adversary that has access to the encrypted data (for instance, she breaks into the biobank and gets access to the encrypted database, or the biobank itself is adversarial) and has access to public knowledge as well as to some side information (as discussed below). 

\descr{Low-Entropy Password.}
The main adversarial goal in this case is to identify the target sequence among a pool of honey sequences, using the side information available, i.e. ``message recovery'' with side information (\textbf{MR-SI}). 

\descr{High-Entropy Password.} The main adversarial goal is to obtain as much information as possible about the sequence that was encrypted.
Note that this adversarial goal is different from ``message recovery,'' according to which Huang et al.~\cite{huang_genoguard:_2015} evaluate GenoGuard's security (cf.~Section~\ref{sec:security}).
The main intuition is that, as also hypothesized by \cite{jaeger2016honey}, using Honey Encryption might actually leak non-negligible information about the sequences encrypted using GenoGuard, even if the adversary cannot correctly recover the full plaintext with non-negligible probability.

\subsection{Adversary's Side Information}\label{subsec:adv}
As mentioned above, the adversary has access to the victim's encrypted sequence as well as to public information such as, Linkage Disequilibrium, Allele Frequencies, Recombination Rate (see Section~\ref{sec:genomics}).
In addition, we assume that the adversary may have some side information about the victim.

When referring to side information, note that we do {\em not} consider knowledge of common traits from phenotype-genotype associations, e.g., gender, ancestry, or other information about the victim that could be obtained, e.g., from social media.
In fact, this is covered by GenoGuard's guidelines, which state that the user should include as much side information about their genome as possible when performing the encoding.
Whereas, even though assuming the user can knowingly enumerate all possible side information is quite a strong assumption, we actually consider the case where the victim undertakes some specific tests, and the adversary learns additional information about the victim from the outcome of those tests.
Additionally, the victim might choose to re-encrypt their genome after obtaining the test results in order to incorporate them in the encoding, and the adversary could use the new ciphertext to extract information about the old ciphertext.

In the high-entropy password setting, we also evaluate the case where an adversary has no side information about the target sequence, in order to quantify the information leakage that might occur from using GenoGuard against baseline inference methods for genomic sequences.
Overall, we consider different types of side information available to the adversary:
\begin{compactenum}
\item \emph{No Side Information:} The adversary has access only to the encrypted sequence. (NB: this is only evaluated for the high-entropy password setting)
\item \emph{Sparse SNVs:} The adversary has access to SNV values sparsely distributed in the target sequence. 
\item \emph{Consecutive SNVs:} The adversary has access to values from a cluster of consecutive SNVs in the target sequence. 

\end{compactenum}
\subsection{Low-Entropy Password} \label{sec:low ent}

We now formally provide a lower bound for the adversary's advantage in the case where she obtains side information about the target sequence and encryption is done using a low-entropy password.

We present a lower bound on the adversary's advantage when she has access to side information about the encrypted sequence and can exhaustively search the message space.
We prove the bound formally, building on~\cite{jaeger2016honey}, which shows the impossibility of known-message attack (KMA) security with low-entropy passwords.
However, instead of the adversary having access to message-ciphertext pairs, we assume that the adversary has access to (position, value) pairs from the encrypted sequence. 
The game defining message recovery security with side information is denoted as \textbf{MR-SI$^\adv_{HE,p_m, p_k}$} and illustrated in Figure~\ref{fig:mrsi}.  

Given a ciphertext $C^*$, an adversary $\adv^{SI}$, with access to side information, is allowed to guess the message by brute force. The adversary $\adv^{SI}$ wins the game if her output message is the same as the original message.

\begin{figure}[tbp]
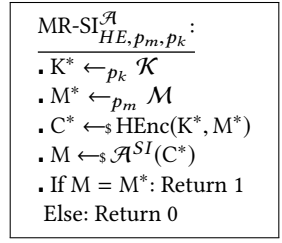

\center
\framebox{
\begin{tabular}[t]{l}
\underline{MR-SI$^\adv_{HE,p_m, p_k}$:}\\
$\centerdot$ $\mathrm{K}^* \gets_{p_k} \mathcal{K}$\\
$\centerdot$ $\mathrm{M}^* \gets_{p_m} \mathcal{M}$\\
$\centerdot$ $\mathrm{C}^* \sample \mathrm{HEnc}(\mathrm{K}^*, \mathrm{M}^*)$\\
$\centerdot$ $\mathrm{M} \sample \adv^{SI}(\mathrm{C}^*)$\\
$\centerdot$ If $\mathrm{M = M}^*$: Return 1\\
~~Else: Return 0\\
\end{tabular}
}
\vspace*{-0.2cm}
\caption{Definition of Message Recovery Security with Side Information (MR-SI).}\label{fig:mrsi}
\vspace*{-0.2cm}
\end{figure}

\begin{figure}[tp]
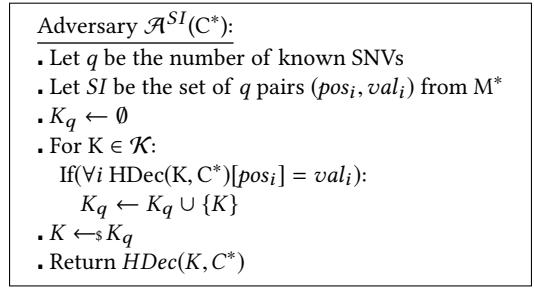

\center
\framebox{
\begin{tabular}[t]{l}
\underline{Adversary $\adv^{SI}$(C$^*$):}\\
$\centerdot$ Let $q$ be the $ \mathrm{number\ of\ known\ SNVs}$\\
$\centerdot$ Let $SI$ be the $\mathrm{ set\ of\ }q \mathrm{\ pairs\ } (pos_i, val_i) \mathrm{\ from\ M}^*$\\
$\centerdot$ $K_q \leftarrow \emptyset $\\
$\centerdot$ For $ \mathrm{K} \in \mathcal{K}$:\\
 \hspace{3mm}If$(\forall i \ \mathrm{HDec(K, C^*})[pos_i] = val_i)$:\\
\hspace{5mm} $K_q \leftarrow K_q \cup \{K\}$\\
$\centerdot$ $K \sample K_q $\\
$\centerdot$ Return $HDec(K, C^*)$\\
\end{tabular}
}
\vspace*{-0.2cm}
\caption{Adversary strategy for MR-SI, having access to $q$ pairs of (position, value) from the original message.}\label{fig:a-si}
\end{figure}

Our intuition is that the advantage of the adversary \adv$^{SI}$ (Figure~\ref{fig:a-si}), for a number $q$ ($q\leq 2n$, where  $n= [\log_2|\mathcal{K}|]$) of positions and values, from the original sequence, is equal to the probability that a randomly chosen key that decrypts correctly all values at the given positions, will also decrypt the rest of the sequence, i.e., \advantage{MR-SI}{HE,p_m,p_k} = $\Pr[\mathrm{MR-SI}_{HE,p_m,p_k}^\adv]$. We denote by $K_q$ the number of keys consistent with the positions and values used as side information. 

Hence, we use Lemma 4.2 from \cite{jaeger2016honey}, as follows:
\begin{lemma} \label{lemma}
If $s_0, s_1, ..., s_{n}$ are positive integer-valued random variables such that $s_0\leq2^{n}$ and $s_{q+1}\leq s_q$, for $q\in \mathbb{Z}_{n}$, then $\mathrm{max}_{q\in \mathbb{Z}_{n}} \expect{s_{q+1}/s_q}\geq \frac{1}{2n}$.
\end{lemma}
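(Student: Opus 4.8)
The plan is to establish the bound by contradiction, exploiting the telescoping structure of the ratios $s_{q+1}/s_q$. Suppose, for contradiction, that $\expect{s_{q+1}/s_q} < \tfrac{1}{2n}$ for every $q \in \mathbb{Z}_{n}$ (i.e.\ $q = 0, 1, \dots, n-1$). I would first observe that the product of the ratios telescopes: $\prod_{q=0}^{n-1} \tfrac{s_{q+1}}{s_q} = \tfrac{s_{n}}{s_0}$. Since each $s_q$ is a positive integer, $s_{n} \geq 1$, and since $s_0 \leq 2^{n}$, the product $\tfrac{s_{n}}{s_0}$ is at least $2^{-n}$ pointwise (on every point of the sample space). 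Hence $\expect{\,\prod_{q=0}^{n-1} s_{q+1}/s_q\,} \geq 2^{-n}$.

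The key step is then to relate the expectation of the product to the product of the expectations, but in the \emph{wrong} direction from what one would get from independence — so I would instead use the AM--GM inequality. Because the monotonicity hypothesis $s_{q+1} \leq s_q$ forces each ratio $s_{q+1}/s_q$ to lie in $(0,1]$, I can bound the geometric mean of the $n$ ratios by their arithmetic mean: pointwise,
\begin{equation*}
\Bigl(\prod_{q=0}^{n-1} \frac{s_{q+1}}{s_q}\Bigr)^{1/n} \leq \frac{1}{n}\sum_{q=0}^{n-1} \frac{s_{q+1}}{s_q}.
\end{equation*}
Raising to the $n$-th power and taking expectations, then applying Jensen's inequality for the convex map $x \mapsto x^n$ on $[0,\infty)$ (or, more simply, bounding the right-hand side using that each ratio is at most $1$ so the $n$-th power of the average is at most the average of the averages), I would arrive at an inequality of the form $\expect{\prod_q s_{q+1}/s_q} \leq \bigl(\tfrac{1}{n}\sum_q \expect{s_{q+1}/s_q}\bigr)^n$ — or at the very least $\expect{\prod_q s_{q+1}/s_q} \leq \tfrac{1}{n}\sum_q \expect{s_{q+1}/s_q}$ after using that the product is $\leq$ each factor. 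Combining with $\max_q \expect{s_{q+1}/s_q} < \tfrac{1}{2n}$ gives $\expect{\prod_q s_{q+1}/s_q} < \tfrac{1}{n} \cdot n \cdot \tfrac{1}{2n} \leq \tfrac{1}{2n} < 2^{-n}$ for $n \geq 1$ (since $2^n \geq 2n$), contradicting the lower bound $2^{-n}$ established above. This yields $\max_q \expect{s_{q+1}/s_q} \geq \tfrac{1}{2n}$.

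The main obstacle I anticipate is getting the direction of the inequality between $\expect{\text{product}}$ and the $\expect{\text{ratios}}$ exactly right: one cannot use independence (the $s_q$ are highly dependent), so the only lever is the boundedness of each ratio in $(0,1]$ coming from the monotonicity hypothesis. The clean way is: the product of numbers in $(0,1]$ is bounded above by each single factor, hence by their minimum, hence (crudely) one needs a sharper route — the correct route is to note $\prod_{q} x_q \le \bigl(\frac1n \sum_q x_q\bigr)^n$ by AM--GM when $x_q \ge 0$, then apply Jensen to pull the expectation inside the $n$-th power the other way, or alternatively to bound $\expect{\prod_q x_q}$ directly. I would be careful to check the edge behavior when some $s_q$ could fail to be positive — but the hypothesis explicitly rules that out — and to confirm that $2^n \ge 2n$ for all integers $n \ge 1$, which closes the contradiction. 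The rest is routine.
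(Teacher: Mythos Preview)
Your argument has a genuine gap at the closing step, and the Jensen route cannot be repaired in the direction you need.

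First, the arithmetic error: you write ``$\tfrac{1}{2n} < 2^{-n}$ for $n \geq 1$ (since $2^n \geq 2n$)''. But $2^n \geq 2n$ is equivalent to $\tfrac{1}{2n} \geq 2^{-n}$, the \emph{opposite} inequality. So even granting your upper bound $\expect{\prod_q s_{q+1}/s_q} < \tfrac{1}{2n}$, this does not contradict the lower bound $\expect{\prod_q s_{q+1}/s_q} \geq 2^{-n}$: the interval $[2^{-n}, \tfrac{1}{2n})$ is nonempty for $n\ge 3$ and a single point for $n=1,2$.

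Second, the Jensen step goes the wrong way. For the convex map $x\mapsto x^n$, Jensen gives $\bigl(\expect{Y}\bigr)^n \le \expect{Y^n}$, not the reverse. So from AM--GM pointwise you get $\expect{\prod_q x_q} \le \expect{\bigl(\tfrac{1}{n}\sum_q x_q\bigr)^n}$, and Jensen only lets you \emph{lower}-bound the right side by $\bigl(\tfrac{1}{n}\sum_q \expect{x_q}\bigr)^n$, which is useless here. Your fallback ``product $\le$ each factor'' only yields $\expect{\prod_q x_q} \le \min_q \expect{x_q} < \tfrac{1}{2n}$, and we've just seen that is not strong enough.

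The argument that actually works (this is the one the paper cites from Jaeger et al.) is shorter and avoids both AM--GM and Jensen. Pointwise, the product $\prod_{q=0}^{n-1} s_{q+1}/s_q = s_n/s_0 \ge 2^{-n}$, so it is impossible that \emph{all} $n$ ratios are strictly below $\tfrac12$; hence on every sample point at least one ratio is $\ge \tfrac12$. Since every ratio is nonnegative, the \emph{sum} $\sum_{q=0}^{n-1} s_{q+1}/s_q \ge \tfrac12$ pointwise. Taking expectations and using linearity, $\sum_{q=0}^{n-1}\expect{s_{q+1}/s_q} \ge \tfrac12$, so the maximum term is at least the average, namely $\tfrac{1}{2n}$. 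That is the whole proof; the monotonicity hypothesis is used only to ensure the ratios lie in $(0,1]$, and positivity of the integers gives $s_n\ge 1$.
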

\begin{proof}
See \cite{jaeger2016honey}.
\end{proof}

Using Lemma~\ref{lemma}, we can compute the adversary's advantage as follows:
\begin{theorem}
Let HE be an encryption scheme and $n = [\log_2|K|]$. Then, for any $p_m, p_k$, the adversary \adv$^{SI}$ who obtains at most $n-1$ positions and values from the original sequence will have advantage:
\begin{center}
\advantage{MR-SI}{HE,p_m,p_k}[(\adv^{SI})] $\geq \frac{1}{2n^2}$
\end{center} \label{thm}
\end{theorem}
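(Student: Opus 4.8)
The plan is to instantiate Lemma~\ref{lemma} with the random variables that count how many keys survive as consistent with the side information, and then read off the advantage of the adversary of Figure~\ref{fig:a-si} from the expected ratio the lemma controls.

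First I would set up the counting variables. Run the game $\mathrm{MR\text{-}SI}^{\adv}_{HE,p_m,p_k}$, obtaining $\mathrm{K}^*,\mathrm{M}^*,\mathrm{C}^*$, and fix an ordering of the positions of $\mathrm{M}^*$ that the adversary will reveal to itself one at a time. For $q\ge 0$ let $s_q$ be the number of keys $\mathrm{K}\in\mathcal{K}$ such that $\mathrm{HDec}(\mathrm{K},\mathrm{C}^*)$ agrees with $\mathrm{M}^*$ on the first $q$ revealed positions; this is exactly $|K_q|$ for the set $K_q$ that $\adv^{SI}$ builds. Three facts put us in the scope of Lemma~\ref{lemma}: (i) $s_0=|\mathcal{K}|\le 2^{n}$ since $n=[\log_2|\mathcal{K}|]$ and with no constraint every key is consistent; (ii) revealing one more $(\mathrm{pos},\mathrm{val})$ pair can only remove keys, so $s_{q+1}\le s_q$; (iii) by correctness of the DTE-then-encrypt construction, $\mathrm{HDec}(\mathrm{K}^*,\mathrm{C}^*)=\mathrm{M}^*$, hence $\mathrm{K}^*\in K_q$ for every $q$ and each $s_q$ is a positive integer. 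Lemma~\ref{lemma} then yields an index $q^\star\in\{0,\dots,n-1\}$ with $\expect{s_{q^\star+1}/s_{q^\star}}\ge \tfrac{1}{2n}$, and in particular $\sum_{q=0}^{n-1}\expect{s_{q+1}/s_q}\ge\tfrac{1}{2n}$.

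Next I would connect the ratio to the winning probability. As in the intuition stated before the theorem, when $\adv^{SI}$ holds $q$ correct pairs it samples $\mathrm{K}$ uniformly from $K_q$ and outputs $\mathrm{HDec}(\mathrm{K},\mathrm{C}^*)$, which succeeds precisely when that random consistent key also decodes $\mathrm{C}^*$ consistently with the remaining positions of $\mathrm{M}^*$; in the one-step form used to invoke Lemma~\ref{lemma} the success probability of this step is $\expect{s_{q+1}/s_q}$ (and never below $\expect{1/s_q}$, since $\mathrm{K}^*\in K_q$ always contributes). Because the adversary does not know the good index $q^\star$ in advance, I would have it first pick $q$ uniformly from $\{0,\dots,n-1\}$ and then run the strategy of Figure~\ref{fig:a-si} on $q$ of its known pairs, using at most $n-1$ positions in total; its overall advantage is then $\tfrac1n\sum_{q=0}^{n-1}\expect{s_{q+1}/s_q}\ge\tfrac1n\cdot\tfrac{1}{2n}=\tfrac{1}{2n^2}$, i.e.\ $\advantage{MR-SI}{HE,p_m,p_k}[(\adv^{SI})]\ge\tfrac{1}{2n^2}$, which is the claim.

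The step I expect to be the main obstacle is exactly this translation: arguing rigorously, from the structure of $\mathrm{HEnc}/\mathrm{HDec}$, that conditioning on $q$ correct $(\mathrm{pos},\mathrm{val})$ pairs confines the unknown key to the nested family $K_q$ and that the MR-SI winning event is (lower) bounded by the ratio $s_{q+1}/s_q$ that Lemma~\ref{lemma} governs. In particular one must handle keys other than $\mathrm{K}^*$ that happen to decode $\mathrm{C}^*$ to a sequence matching $\mathrm{M}^*$ on the known (and unknown) positions -- these only raise the success probability, so a lower bound is safe, but the argument must say so -- and one must be explicit about the passage from ``decoding the next SNV correctly'' to ``recovering $\mathrm{M}^*$'' that is implicit in the stated intuition and in the use of Lemma~\ref{lemma}; an implicit regularity assumption on the DTE (that prefixes of $\mathrm{M}^*$ are never assigned full probability mass) is what makes the $s_q$ genuinely shrink and the bound meaningful. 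Everything after that translation is the routine bookkeeping of applying Lemma~\ref{lemma} and averaging uniformly over $q$.
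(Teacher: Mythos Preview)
Your proposal is essentially the paper's own argument: define $s_q=|K_q|$, check $s_0\le 2^n$, $s_{q+1}\le s_q$, and $s_q\ge 1$ via $\mathrm{K}^*\in K_q$, invoke Lemma~\ref{lemma} to get $\max_q\expect{s_{q+1}/s_q}\ge 1/(2n)$, and average uniformly over $q$ to obtain $1/(2n^2)$; the paper packages the same steps as a passage from Game~1 to Game~2 to Experiment~1. The ``main obstacle'' you single out---identifying the MR-SI winning event with the ratio $s_{q+1}/s_q$---is exactly the step the paper handles by redefining $K_{q+1}$ in Game~2 as the keys that decrypt $\mathrm{C}^*$ to $\mathrm{M}^*$, and your caveats about it are well placed.
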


\begin{figure}[tp]
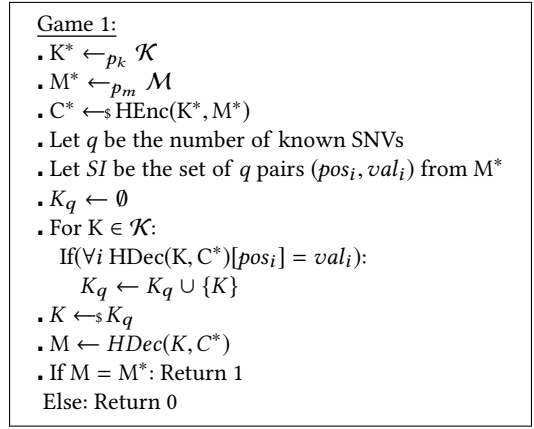

\center
\framebox{
\begin{tabular}[t]{l}
\underline{Game 1:}\\
$\centerdot$ $\mathrm{K}^* \gets_{p_k} \mathcal{K}$\\
$\centerdot$ $\mathrm{M}^* \gets_{p_m} \mathcal{M}$\\
$\centerdot$ $\mathrm{C}^* \sample \mathrm{HEnc}(\mathrm{K}^*, \mathrm{M}^*)$\\
$\centerdot$ Let $q$ be the $ \mathrm{number\ of\ known\ SNVs}$\\
$\centerdot$ Let $SI$ be the $\mathrm{ set\ of\ }q \mathrm{\ pairs\ } (pos_i, val_i) \mathrm{\ from\ M}^*$\\
$\centerdot$ $K_q \leftarrow \emptyset $\\
$\centerdot$ For $ \mathrm{K} \in \mathcal{K}$:\\
 \hspace{3mm}If$(\forall i \ \mathrm{HDec(K, C^*})[pos_i] = val_i)$:\\
\hspace{5mm} $K_q \leftarrow K_q \cup \{K\}$\\
$\centerdot$ $K \sample K_q $\\
$\centerdot$ $\mathrm{M} \leftarrow HDec(K, C^*)$\\
$\centerdot$ If $\mathrm{M = M}^*$: Return 1\\
~~Else: Return 0\\
\end{tabular}
}
\vspace*{-0.2cm}
\caption{Game 1, used in the proof of Theorem \ref{thm}. }\label{fig:g1}
\end{figure}

\begin{proof}
The advantage $\advantage{MR-SI}{HE,p_m,p_k}[(\adv^{SI})]$, is equal to $\Pr[\mathrm{Game\ 1\ Retuns\ 1}]$ where Game 1 is defined in Figure~\ref{fig:g1}. This is due to the fact that Game 1 is MR-SI$^\adv_{HE,p_m, p_k}$ together with Adversary $\adv^{SI}$(C$^*$). 
By applying a few transformations to Game 1 and changing the final check, i.e. instead of checking if $M=M^*$ before returning 0 or 1, it checks if the key $K$ is in the subset that decrypts $C^*$ to $M^*$ we obtain an equivalent game, Game 2 (Figure~\ref{fig:g2}). Thus, $\Pr[\mathrm{Game\ 1}  \mathrm{\ Returns\ } 1] = \Pr[\mathrm{Game\ 2}  \mathrm{\ Returns\ } 1]$.

Since $K_{q+1}\subseteq K_q$, for fixed $q$, the probability that Game 2 will return 1 is $\expect{\frac{|K_{q+1}|}{|K_q|}}$. So we have $\Pr[\mathrm{Game\ 2}  \mathrm{\ Returns\ } 1]  = \sum^{n}_{q=0}\frac{1}{n}\expect{\frac{|K_{q+1}|}{|K_q|}}\vspace{0.1cm}$.

We then define Experiment 1 (Figure~\ref{fig:e1}), which shows that the distribution of $K_{q+1}$ and $K_q$ for $q \in \mathbb{Z}_n$ is the same as the distribution in Game 1. Let $s_q$ denote $|K_q|$ and $\epsilon = max_{q\in \mathbb{Z}_n}\expect{\frac{s_{q+1}}{s_q}}$, where the expectation is taken in Experiment 1. Since all $K_q$ contain at least the key $K^*$, they all are positive. Thus, by applying Lemma~\ref{lemma} we have $\epsilon \geq \frac{1}{2n}$. Then:
\begin{center}
\advantage{MR-SI}{HE,p_m,p_k}[(\adv^{SI})]  = $\Pr[\mathrm{Game\ 2\ Returns\ 1} ]\newline  =  \sum^{n}_{q=0}\frac{1}{n}\expect{\frac{|K_{q+1}|}{|K_q|}} \geq \frac{1}{n}\cdot \epsilon \geq \frac{1}{2n^2}$
\end{center}
\vspace{-0.2cm}
\end{proof}

This shows that the security of the systems is weak even with a small number of pairs (position, value) from the target sequence available to the attacker, as opposed to having multiple known ciphertext-plaintext pairs.

 \begin{figure}[tp]
\vspace*{-0.3cm}
\center
\framebox{
\begin{tabular}[t]{l}
\underline{Game 2:}\\
$\centerdot$ $\mathrm{K}^* \gets_{p_k} \mathcal{K}$\\
$\centerdot$ $\mathrm{M}^* \gets_{p_m} \mathcal{M}$\\
$\centerdot$ $\mathrm{C}^* \sample \mathrm{HEnc}(\mathrm{K}^*, \mathrm{M}^*)$\\
$\centerdot$ Let $q$ be the $ \mathrm{number\ of\ known\ SNVs}$\\
$\centerdot$ Let $SI$ be the $\mathrm{ set\ of\ }q \mathrm{\ pairs\ } (pos_i, val_i) \mathrm{\ from\ M}^*$\\
$\centerdot$ $K_0 \leftarrow \mathcal{K}; K_1, K_2, ...K_{q+1} \leftarrow \emptyset $\\
$\centerdot$ For $  (pos_i, val_i) \in SI$:\\
\hspace{3mm}For $ \mathrm{K} \in K_{i-1}$:\\
\hspace{5mm}If$(\mathrm{HDec(K, C^*})[pos_i] = val_i)$:\\
\hspace{7mm} $K_i\leftarrow K_i \cup \{K\}$\\
$\centerdot$ For $K \in K_q $:\\
 \hspace{3mm} If $HDec(K, C^*) = M^*$\\
 \hspace{5mm} $K_{q+1} \leftarrow K_q \cup \{K\}$\\
$\centerdot$ $K \sample K_q $\\
$\centerdot$ If $K \in K_{q+1}$: Return 1\\
~~Else: Return 0\\
\end{tabular}
}
\vspace*{-0.2cm}
\caption{Game 2, a transformed version of Game 1.}\label{fig:g2}
\end{figure}

 \begin{figure}[tp]
\center
\framebox{
\begin{tabular}[t]{l}
\underline{Experiment 1:}\\
$\centerdot$ $K_0 \leftarrow \mathcal{K}; K_1, K_2, ...K_{n} \leftarrow \emptyset $\\
$\centerdot$ $\mathrm{K}^* \gets_{p_k} \mathcal{K}$\\
$\centerdot$ $\mathrm{M}^* \gets_{p_m} \mathcal{M}$\\
$\centerdot$ $\mathrm{C}^* \sample \mathrm{HEnc}(\mathrm{K}^*, \mathrm{M}^*)$\\
$\centerdot$ Let $n$ be the $ \mathrm{number\ of\ known\ SNVs }$\\
$\centerdot$ Let $SI$ be the $\mathrm{ set\ of\ }n \mathrm{\ pairs\ } (pos_i, val_i) \mathrm{\ from\ M}^*$\\
$\centerdot$ For $  (pos_i, val_i) \in SI$:\\
 \hspace{3mm}For $ \mathrm{K} \in K_{i-1}$:\\
\hspace{5mm}If$(\mathrm{HDec(K, C^*})[pos_i] = val_i)$:\\
\hspace{7mm} $K_i\leftarrow K_i \cup \{K\}$\\

\end{tabular}
}
\vspace*{-0.2cm}
\caption{Experiment 1, used in the proof of Theorem~\ref{thm}.}\label{fig:e1}
\end{figure}

\subsection{High-Entropy Password}\label{sec:attacks-high} 

We now give an overview of our inference strategy using the GenoGuard ciphertext and discuss the baseline inference methods we evaluate our strategy against.

\subsubsection{Baseline Inferences}\label{sec:inference}
We compare the performance of our inference strategy to baselines for genomic sequence inference. %
For these baselines, we assume that the adversary has access only to side information, as discussed in Section~\ref{subsec:adv}, but not the ciphertext resulting from GenoGuard's encode-then-encrypt method.

As done by Samani et al.~\cite{samani_quantifying_2015}, we set to infer the value of an unknown SNV$_i$, given a probabilistic modeling of genome sequences.
More specifically, we use the following models for SNV correlation:
\begin{compactitem}
\item \emph{B1:} 1st order Markov chain model from AF and LD: most likely genotype.
\item \emph{B2:} 1st order Markov chain model from AF and LD: sampled genotype.
\item \emph{B3:} RR Model.
\end{compactitem}
\subsubsection{GenoGuard Inference Methods} \label{subsec:attack-high}

Our method is based on exploiting the similarities between the honey sequences in order to obtain information about the target sequence. More specifically, we use two strategies:
\begin{enumerate}
	\item \emph{G1.} Side information-weighted SNVs: We assign a weight to each of the honey sequences according to the amount of side information contained. We then consider only the sequences with the highest weight and output the most common SNVs among them as our candidate SNVs for the target sequence. In the case of no side information, we consider the most common SNVs across all honey sequences.
	\item \emph{G2.} Interval and Side information-weighted SNVs: Similar to the previous method, however, we also adjust the weight of each sequence when considering the most common SNVs by the size of the interval that the seed of the respective sequence will fall into. In the case of no side information, we take the most common SNVs from all honey sequences, weighted by the previously mentioned interval size.

\end{enumerate}

\begin{figure*}[t]
  \centering
  \begin{subfigure}[b]{1.0\columnwidth}
  \centering
  \includegraphics[width=0.85\columnwidth]{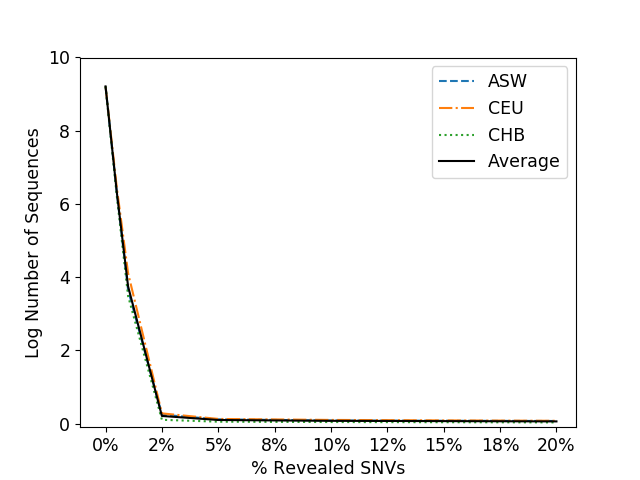}
  \caption{\textmd{\#Candidate sequences vs \% revealed sparse SNVs from target sequence}}
  \label{fig:low_rnd_seqs}
	\end{subfigure}
	\hfill
  \begin{subfigure}[b]{1.0\columnwidth}
  \centering
  \includegraphics[width=0.85\columnwidth]{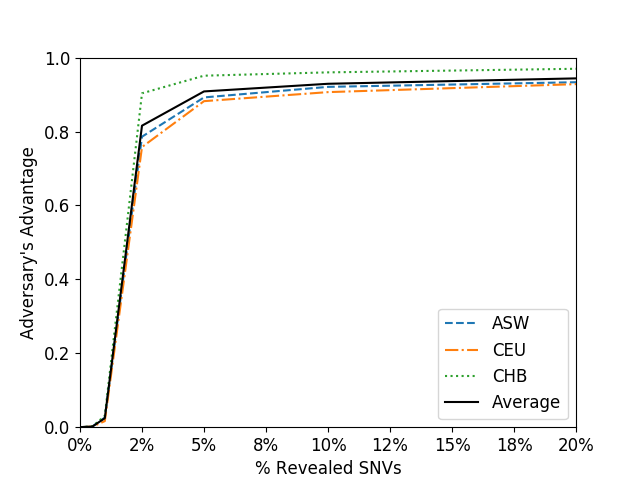}
  \caption{\textmd{Adv's advantage vs \% revealed sparse SNVs from target sequence}}
  \label{fig:low_rnd}
  	\end{subfigure}
  \vspace{-0.2cm}
  \caption{Results of our evaluation in the low-entropy setting vis-\`a-vis an adversary with access to side information in the form of sparse SNVs from the target sequence.}
  \vspace{-0.2cm}
\end{figure*}

\section{Experimental Evaluation} \label{sec:results}

In this section, we present the datasets used for the experimental evaluation and the results obtained for both evaluation methods, %
i.e., low-entropy and high-entropy passwords.

\subsection{Dataset} \label{subsec:data}
We use the Phase III data from the HapMap dataset, i.e., the third release from the HapMap project.\footnote{\url{https://www.sanger.ac.uk/resources/downloads/human/hapmap3.html}}
HapMap was an international project~\cite{international2003international}, run between 2002 and 2009, aimed at developing a haplotype map of the human genome, and describe the common patterns of human genetic variation. 
The HapMap data has been made publicly available and used for various research purposes, e.g., to research genetic variants affecting health, disease and responses to drugs and environmental factors, etc.

The Phase III release increased the number of DNA samples to 1,301 and included 11 different populations.
In our experiments, we select data from three populations:
\begin{compactenum}
	\item ASW (African ancestry in Southwest USA),
	\item CEU (Utah residents with Northern and Western European ancestry from the CEPH collection),
	\item CHB (Han Chinese in Beijing, China).
\end{compactenum}
We sample 50 sequences at random from each of them, for a total of 150 sequences.

For all three populations presented above, we test the same SNVs positions.

\subsection{Low-Entropy Password} \label{subsec:low-results}

\subsubsection{Experiment Overview} \label{subsec:attack-low}
We use the following strategy for our evaluation:
\begin{compactenum}
\item Encrypt a sequence using GenoGuard's DTE-then-encrypt method: for each of the 150 sequences, we select and encrypt 1,000 positions from chromosome 13, with a storage overhead $h = 4$ (the same as in the experimental evaluation of GenoGuard), using a low-entropy password.
\item Decrypt the ciphertext, using the top 10,000 most common passwords released by Daniel Miessler\footnote{\url{https://github.com/danielmiessler/SecLists/blob/master/Passwords/Common-Credentials/10k-most-common.txt}}  (with the encryption password  in the set), to obtain plausible looking honey sequences;
\item Exclude the sequences which do not contain the side information.
\item Output the number of remaining sequences, given how many of the possible passwords match the side information.
\end{compactenum}

\subsubsection{Adversary's Advantage.} The performance of the adversary is calculated as the probability of the adversary guessing the target sequence within the remaining pool of honey sequences.

\subsubsection{Sparse SNVs from the Target Sequence}
Figure~\ref{fig:low_rnd_seqs} illustrates how the log number of candidate sequences decreases with more side information available. With 1\% side information (10 SNVs), the number of sequences that match the side information reduces to approximately 44 on average across the three populations. 
Figure~\ref{fig:low_rnd} shows the increase of the adversary's advantage, averaged over 1000 rounds, vis-\`a-vis the number of SNVs available to her. 2.5\% side information (25 SNVs)  gives the adversary an advantage of approximately 80\% on average for the ASW and CEU populations and close to 90\% for the CHB population. With more side information, the adversary's advantage increases to over 90\%  for all populations.

\begin{figure*}[t]
  \centering
  \begin{subfigure}[b]{1.03\columnwidth}
  \centering
\includegraphics[width=0.835\columnwidth]{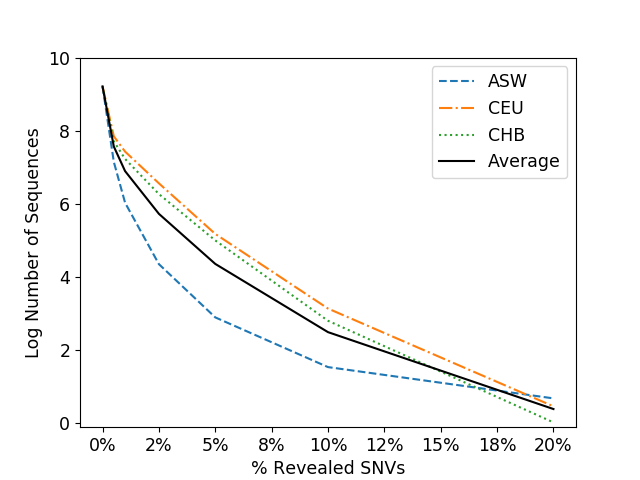} 
  \caption{\textmd{\#Candidate sequences vs \% revealed consecutive SNVs from target sequence}}
  \label{fig:low_consec_seqs}
	\end{subfigure}
	\hfill
  \begin{subfigure}[b]{1.03\columnwidth}
  \centering\includegraphics[width=0.835\columnwidth]{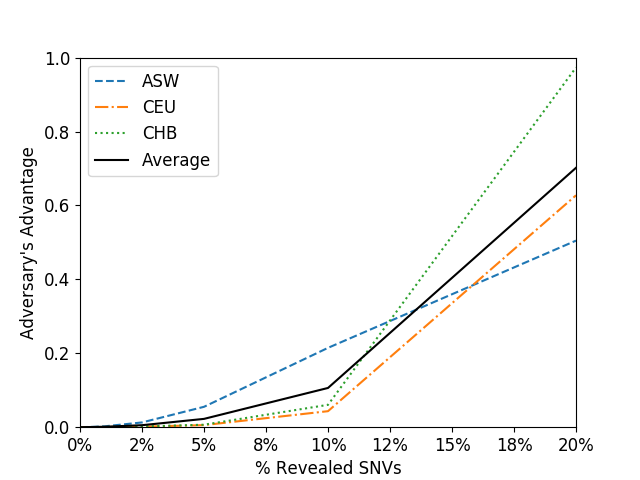} 
  \caption{\textmd{Adv's advantage vs \% revealed consecutive SNVs from target sequence}}
  \label{fig:low_consec}
  	\end{subfigure}
  \vspace{-0.3cm}
  \caption{Results of our evaluation in the low-entropy setting vis-\`a-vis an adversary with access to side information in the form of a cluster of consecutive SNVs from the target sequence.}
  \vspace{-0.1cm}
\end{figure*}

\begin{figure*}[t]
  \centering
	\hspace*{-0.5cm}
   \begin{subfigure}[b]{0.35\textwidth}
\includegraphics[width=0.99\columnwidth]{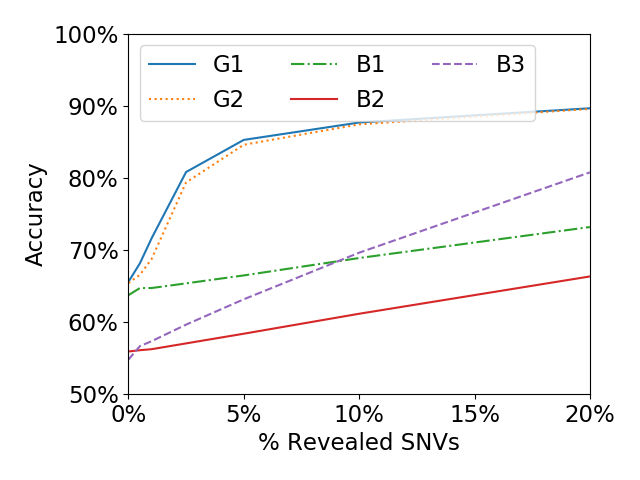} \vspace{-0.3cm}
  \caption{\textmd{ASW}}
  \label{fig:ASW_rnd}
  	\end{subfigure}
	\hspace*{-0.5cm}
   \begin{subfigure}[b]{0.35\textwidth}
\includegraphics[width=0.99\columnwidth]{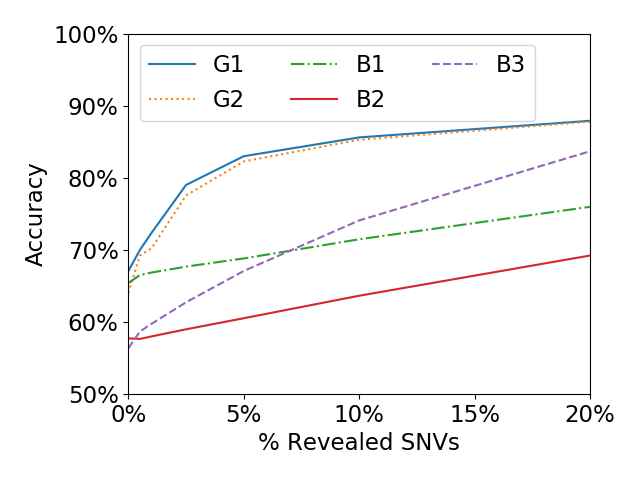}\vspace{-0.3cm}
  \caption{\textmd{CEU}}
   \label{fig:CEU_rnd}
  	\end{subfigure}
		\hspace*{-0.5cm}
   \begin{subfigure}[b]{0.35\textwidth}
\includegraphics[width=0.99\columnwidth]{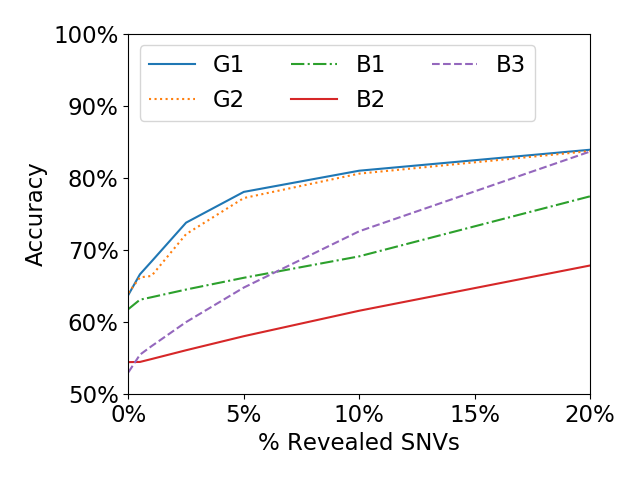} \vspace{-0.3cm}
    \caption{\textmd{CHB}}
    \label{fig:CHB_rnd}
  	\end{subfigure}
\vspace{-0.3cm}
    \caption{Inference accuracy results in the high-entropy password setting for all three populations for side information available to the attacker in the form of sparse SNVs from the target sequence.}
\label{fig:sparse_high}
\vspace{-0.3cm}
\end{figure*}

\subsubsection{Consecutive SNVs form the Target Sequence}
When the adversary has access to side information as a cluster of consecutive SNVs, she needs more side information to achieve comparable results to the Sparse SNVs case. 
Figure~\ref{fig:low_consec_seqs} shows the decrease of the log number of candidate sequences with increasing side information available. We observe the fastest decrease in the number of sequences with increasing side information available is for the ASW population when less than 10\% of the sequence available. Figure~\ref{fig:low_consec} shows the increase of the adversary's advantage, averaged over 1000 rounds, vis-\`a-vis the number of SNVs available to her. The increase in the adversary's advantage is slower as well, with an average of 70\% across the three populations for 20\% of the sequence available to the attacker.

\subsection{High-Entropy Password}\label{subsec:high-results}

\subsubsection{Experiment Overview}\label{subsec:overview}
The brute-force experiment presented in GenoGuard %
indicates that, when decrypting the same ciphertext with multiple passwords, the correct sequence would be ``buried'' among the incorrect ones. 
Hence, there is some similarity between the original sequence and the honey sequences.

As a result, we set to quantify the corresponding privacy loss, i.e. {\bf\em how much more information does an adversary obtain via access to ciphertext encrypted using GenoGuard  obtains, compared to one that was not}. 

Overall, we use the following evaluation strategy:
\begin{compactenum}
\item Encrypt a sequence using GenoGuard's DTE-then-encrypt method: for each of the 150 sequences, we select and encrypt 1,000 positions from chromosome 13, with a storage overhead $h = 4$, using a random, high-entropy password (approx.~72 bits).
\item Decrypt the ciphertext, using the top 10,000 most common passwords released by Daniel Miessler, to obtain plausible looking honey sequences;
\item Infer the victim's sequence using the honey sequences.
\end{compactenum}

\subsubsection{Accuracy} To measure the performance and assess the potential leakage that access to the GenoGuard ciphertext might yield, we measure the accuracy as the number of correctly guessed SNVs over the total number or SNVs guessed. 

\begin{figure*}
    \begin{minipage}{0.99\columnwidth}
 \centering
 \includegraphics[width=0.85\columnwidth]{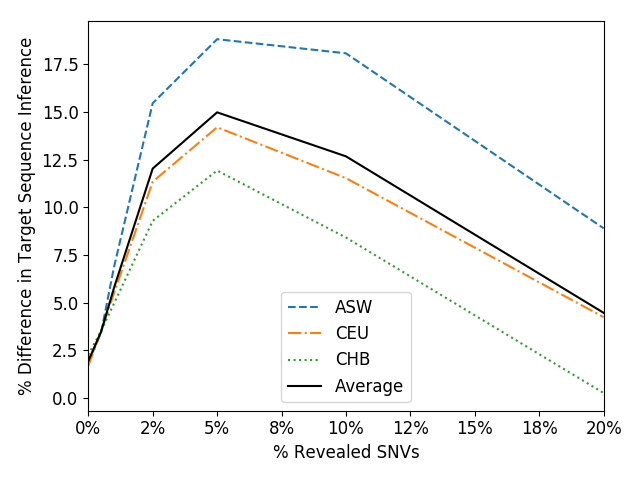}
\vspace{-0.3cm}
 \caption{Difference in accuracy between the best performing GenoGuard and baseline inference methods, vis-\`a-vis an adversary with side information of sparse SNVs from the target sequence, in the high-entropy password setting.}
  \label{fig:delta random}
\end{minipage}
\hfill
\setcounter{figure}{12}
   \begin{minipage}{0.99\columnwidth}
 \centering
 \includegraphics[width=0.9\columnwidth]{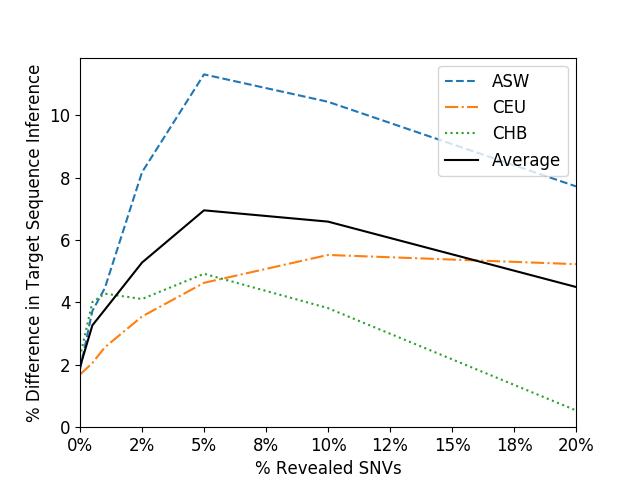}
  \vspace*{-0.2cm}
 \caption{
Difference in accuracy between the best performing GenoGuard and baseline inference methods, vis-\`a-vis an adversary with side information in the form of consecutive SNVs from the target sequence, in the high-entropy password setting.} 
  \label{fig:delta_consec}
\end{minipage}
\end{figure*}

\setcounter{figure}{11}
\begin{figure*}[t]
  \centering
  \hspace*{-0.5cm}
   \begin{subfigure}[b]{0.35\textwidth}
\includegraphics[width=0.99\columnwidth]{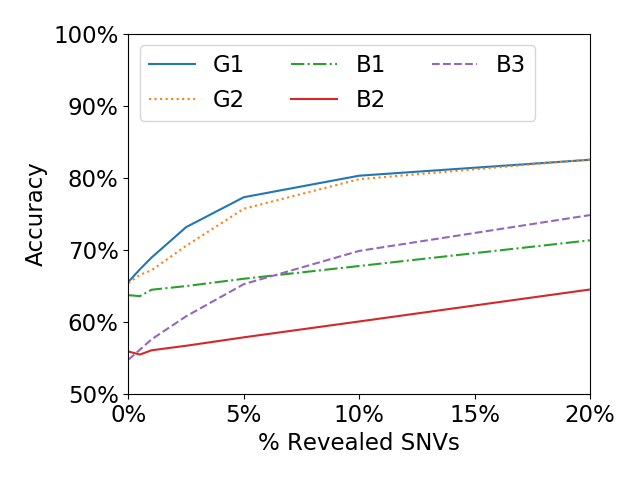} \vspace{-0.3cm}
  \caption{\textmd{ASW}}
  \label{fig:ASW_consec}
  	\end{subfigure}
  \hspace*{-0.5cm}
   \begin{subfigure}[b]{0.35\textwidth}
\includegraphics[width=0.99\columnwidth]{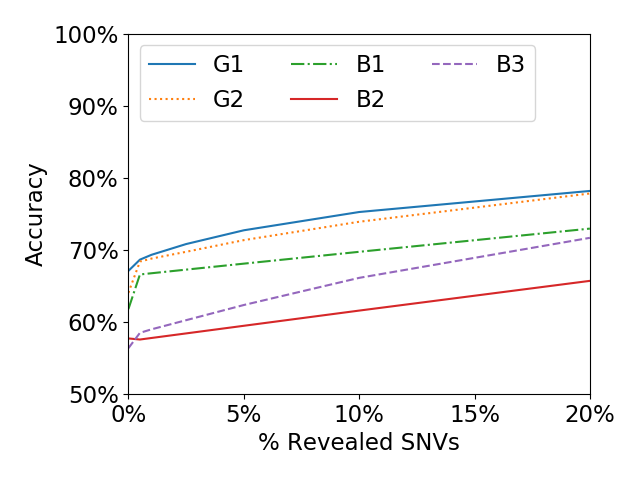}\vspace{-0.3cm}
  \caption{\textmd{CEU}}
  \label{fig:CEU_consec}
  	\end{subfigure}
  \hspace*{-0.5cm}
   \begin{subfigure}[b]{0.35\textwidth}
\includegraphics[width=0.99\columnwidth]{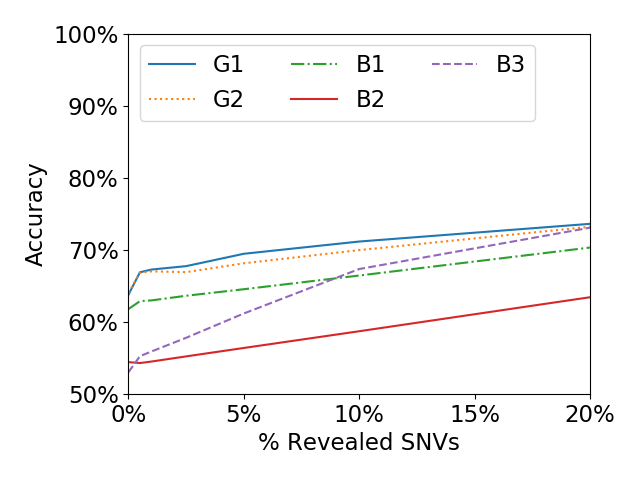}
\vspace{-0.3cm}
    \caption{\textmd{CHB}}
    \label{fig:CHB_consec}
  	\end{subfigure}
\vspace{-0.3cm}
    \caption{Inference accuracy results in the high-entropy password setting for all three populations for side information available to the attacker in the form of a cluster of consecutive SNVs from the target sequence.}
\label{fig:consec_high}
\vspace{-0.2cm}
\end{figure*}

\subsubsection{Sparse SNVs from the Target Sequence}

Figure~\ref{fig:sparse_high} shows the inference results in this case for the three population groups, averaged over 1,000 rounds. In the case where no side information is available to the attacker, for all populations, the attacker can infer approximately 2\% more of the target sequence from the GenoGuard ciphertext than just by using baseline inferences based on the population.
For the ASW population (Figure~\ref{fig:ASW_rnd}), over 80\% of the target SNVs are guessed correctly with 2.5\% (25 SNVs) or more of the target sequence available to the attacker.  For the CEU population (Figure~\ref{fig:CEU_rnd}), approximately 79\% of the target SNVs are guessed correctly with 2.5\% of the original sequence available to the attacker  and over 83\% of the target SNVs are guessed correctly with 5\% (50 SNVs) or more are available. In the case of the CHB population (Figure~\ref{fig:CHB_rnd}), the accuracy is of the GenoGuard inference is the lowest among the three populations, with over 73\% accuracy in the cases where 2.5\% SNVs are available to the attacker. The accuracy surpasses 80\% for the CHB population when 10\% or more of the target SNVs are available to the attacker.

In Figure~\ref{fig:delta random}, we illustrate the difference between the best performing inference method using the GenoGuard ciphertext and the best performing baseline inference method. On average, having access to the GenoGuard ciphertext improves the inference accuracy. The peak of the improvement in accuracy (approximately 15\%) over the baseline models can be observed when the attacker has access to 5\% sparse SNVs from the target sequence. After this, we can see a decline in this difference with increasing SNVs available for the attacker, as the baseline inference becomes more accurate with more information available. In fact, for the CHB population, the best performing baseline (B3) for the case when 20\% of the target sequence is available to the attacker provides an accuracy comparable to the GenoGuard inferences ($\approx$83.8$\%$).

\subsubsection{Consecutive SNVs form the Target Sequence}
In Figure~\ref{fig:consec_high}, we illustrate the accuracy of the inference methods across the three populations when the adversary obtains, as side information, a cluster of consecutive SNVs, averaged over 1,000 rounds.
For the ASW population (Figure~\ref{fig:ASW_consec}), the accuracy of inferred SNVs from the correct sequence using the GenoGuard ciphertext is over 73\% for 2.5\% or more of the target SNVs available as side information, and over 80\% when 10\% or more of the sequence is available to the attacker. The GenoGuard inference for the CEU population (Figure~\ref{fig:CEU_consec}) is over 70\% when 2.5\% or more of the target sequence is available to the attacker. For the CHB population (Figure~\ref{fig:CHB_consec}), the GenoGuard inferences have the lowest accuracy across the three populations, obtaining 70\% or more accuracy only when 5\% or more of the target sequence is available to the attacker.

Figure~\ref{fig:delta_consec} shows the difference between the best performing GenoGuard inference method and the best performing baseline inference method. On average, the inference using the GenoGuard ciphertext gives better accuracy than the baseline methods, but overall less than the previous case where sparse SNVs are available as side information. In this case, the peak in the improvement of accuracy compared to the baseline methods is approximately 7\%, on average, across the three populations, when 5\% of the target SNVs are available to the attacker.
For the CHB population, when 20\% of the sequence is available as side information to the attacker, we observe, as in the case of sparse SNVs, that the best performing baseline inference method (B3) obtains a comparable accuracy to that of the GenoGuard inferences ($\approx$73$\%$).
 
\subsection{Take-Aways}

Overall, our experimental evaluation shows that, when the adversary has access to some side information, access to a ciphertext encrypted using GenoGuard can help her recover a remarkably high percentage of the SNVs from the target sequence or significantly increase her advantage in recovering the correct sequence.

Therefore, users need to include as much side information as possible when encrypting their genomic sequence. However, this prompts a parallel problem, with respect to how much that user is willing to publicly share (as this information is saved together with the ciphertext), considering that even without access to the GenoGuard ciphertext, it can enable attackers to correctly predict most of the target genome.
 
\section{Related Work}\label{sec:related work}
In this section, we review relevant prior work on genome privacy and honey encryption.

\subsection{Genome Privacy}

\descr{Re-identification.} Genomic data is hard to anonymize, due to the genome's uniqueness as well as correlations within different regions.
For instance, Gymrek et al.~\cite{gymrek_identifying_2013} demonstrate that surnames of genomic data donors can be inferred using data publicly available from recreational genealogy databases. %
They also discuss how, through deep genealogical ties, publishing even a few markers can lead to the identification of another person who might have no acquaintance with the one who released their genetic data. %
In follow-up work, Erlich et al.~\cite{Erlich690} show that a genetic database which covers only 2\% of the target population can be used to find a third-cousin of nearly any individual.

\descr{Membership inference.} Homer et al.~\cite{homer_resolving_2008} present a membership inference attack in which they infer the presence of an individual's genotype within a complex genomic DNA mixture.
Wang et al.~\cite{wang2009learning} improve on the attack using correlation statistics of just a few hundreds SNPs, while Im et al.~\cite{im2012sharing} rely on regression coefficients.
Shringarpure and Bustamante~\cite{beacon_SB} perform membership inference against the Beacon network.\footnote{Beacons are web servers that answer questions e.g. ``does your dataset include a genome that has a specific nucleotide at a specific genomic coordinate?'' to which the Beacon responds yes or no, without referring to a specific individual; see: \url{https://github.com/ga4gh-beacon/specification}.}
They use a likelihood-ratio test to predict whether an individual is present in the Beacon, detecting membership within a Beacon with 1,000 individuals using 5,000 queries. %
Also, Von Thenen et al.~\cite{von_Thenen200147} reduce the number of queries to less than 0.5\%. 
Their best performing attack uses a high-order Markov chain to model the SNP correlations, as described in~\cite{samani_quantifying_2015}.
Note that, as part of the attacks described in this paper, we use inference methods from~\cite{samani_quantifying_2015} as our baseline inference methods.

\descr{Data sharing.} Progress in genomics research is dependent on collaboration and data sharing among different institutions.
Given the sensitive nature of the data, as well as regulatory and ethics constraints, this often proves to be a challenging task.
Kamm et al.~\cite{kamm_new_2013} propose the use of secret sharing to distribute data among several entities and, using secure multi-party computations, support privacy-friendly computations across multiple entities.
Wang et al.~\cite{Wang2015} present GENSETS, a genome-wide, privacy-preserving similar patients querying system using genomic edit distance approximation and private set difference protocols. 
Then, Chen et al.~\cite{chen_princess:_2017} use Software Guard Extensions (SGX) to build a privacy-preserving international collaboration tool; this enables secure and distributed computations over encrypted data, thus supporting the analysis of rare disease genetic data across different continents.
Finally, Oprisanu and De Cristofaro~\cite{oprisanu2018anonimme} present a framework (``AnoniMME'') geared supporting anonymous queries within the Matchmaker Exchange platform, which allows researchers to perform queries for rare genetic disease discovery over multiple federated databases.

\descr{Privacy-friendly testing.} Another line of work focuses on protecting privacy in the context of personal genomic testing, i.e., computational tests run on sequenced genomes to assess, e.g., genetic susceptibility to diseases, determining the best course of treatment, etc. 
Baldi et al.~\cite{baldi2011countering} assume that each individual keeps a copy of their data and consents to tests done in such a way that only the outcome is disclosed. 
They present a few cryptographic protocols allowing researchers to privately search mutations in specific genes.
Ayday et al.~\cite{ayday_protecting_2013} rely on a semi-trusted party to store an encrypted copy of the individual's genomic data: using additively homomorphic encryption and proxy re-encryption, they allow a Medical Center to privately perform disease susceptibility tests on patients' SNPs. 
Naveed et al.~\cite{naveed14} introduce a new cryptographic primitive called Controlled Functional Encryption (CFE), 
which allows users to learn only certain functions of the (encrypted) data, using keys obtained from an authority; however, the client is required to send a fresh key request to the authority every time they want to evaluate a function on a ciphertext.
Overall, for an overview of privacy-enhancing technologies applied to genetic testing, we refer the reader to~\cite{sok}.

\descr{Long-term security.} %
As the sensitivity of genomic data does not degrade over time, access to an individual's genome poses a threat to her descendants, even years after she has deceased. 
To the best of our knowledge, GenoGuard~\cite{huang_genoguard:_2015} is the only attempt %
to provide long-term security. 
GenoGuard, reviewed in Section~\ref{sec:genoguard}, relies on Honey Encryption~\cite{HE2}, aiming to provide confidentiality in the presence of brute-force attacks;
it only serves as a storage mechanism, i.e., it does not support selective retrieval or testing on encrypted data (as such, it is not ``composable'' with other techniques supporting privacy-preserving testing or data sharing). 
In this paper, we provide a security analysis of GenoGuard.
In parallel to our work, Cheng et al.~\cite{cheng} recently propose attacks against probability model transforming encoders, and also evaluate them on GenoGuard. 
Using machine learning, they train a classifier to distinguish between the real and the decoy sequences, and exclude all decoy data for approximately 48\% of the individuals in the tested dataset.

\subsection{Honey Encryption}
Juels and Ristenpart~\cite{HE2} %
introduce Honey Encryption (HE) as a general approach to encrypt messages using low min-entropy keys such as passwords.
HE, reviewed in Section~\ref{sec:he},  is designed to yield plausible-looking ciphertexts, called honey messages, even when decrypted with a wrong password. 
In a nutshell, it uses a distribution-transforming-encoder (DTE) to encode a-priori knowledge of the message distribution, 
aiming to provide {\em message recovery} security against computationally unbounded adversaries.
It was originally designed to encrypt credit card information, RSA secret keys, etc.~\cite{tyagi2015honey}. 

Message recovery security can be defined as follows~\cite{jaeger2016honey}: given a message encrypted under a key whose maximum probability of taking on any particular value is at most $1/2^\mu$, an unbounded adversary's ability to guess the correct message, even given the ciphertext, is at most $1/2^\mu$ plus a negligible amount.
However, Jaeger et al.~\cite{jaeger2016honey} discuss deficiencies of message recovery security as per modern security goals. 
More specifically, not only they prove the impossibility of known-message attack security in the case of low-entropy keys, but also mention that schemes meeting message recovery security might actually leak a significant amount of information about the plaintexts, even if the adversary cannot correctly recover the full message with non-negligible probability.
Although this serves as an inspiration to our work, note that the context of our evaluation is different, as in the low-entropy setting, we show that a lower bound also applies to the adversary's advantage when partial information from the target sequence is available to the attacker, compared to having pairs of ciphertext and plaintext.
Another work studying attacks against HE is that by Cheng et al.~\cite{cheng}, which we have reviewed above.

\descr{Honeywords.} Before Honey Encryption~\cite{HE2}, Juels and Rivest~\cite{juels2013honeywords} introduced the concept of ``honeywords'' to improve the security of password databases. 
They propose adding honeywords (false passwords) to a password database together with the actual password (hashed with salt) of each user.
This way, an adversary who hacks into the password database and inverts the hash function cannot know whether she has found the password or a honeyword.

Wang et al.~\cite{wang2018security} present an evaluation of the honeyword system~\cite{juels2013honeywords}, finding it to be vulnerable to a number of attacks. %
More specifically, an adversary that wants to distinguish between real and decoy passwords can do so with a success rate of 30\% compared to an expected 5\%. In the case of a targeted attack, when the adversary is assumed to know some personal information about the user, they show that the adversary's success rate is further improved to about 60\%. %
Our attacks differ from those in~\cite{wang2018security}, first, as they target the honeywords system~\cite{juels2013honeywords}, while we focus on Honey Encryption~\cite{HE2}, and in particular its application to GenoGuard~\cite{huang_genoguard:_2015}. 
Moreover, their attack only aims to identify the correct password from a given password pool, while we also examine the case when the correct password is not found within the tried passwords.

\section{Conclusion} \label{sec:conc}
 
Motivated by the decreasing cost of genomic sequencing and the related arising privacy challenges, the research community has produced a large body of work  on genomic privacy. Most of the techniques focus on cryptographic tools, but fail to address the need for long term confidentiality for genomic data. %
In fact, GenoGuard~\cite{huang_genoguard:_2015} is the only tool available for ensuring the long term encryption needed for genomic data \cite{sok}. 
 
In this paper, we set to determine whether GenoGuard can be safely used as an encryption tool, quantifying the additional privacy leakage arising from using it. We analyzed GenoGuard under two scenarios, based on the encryption password, for an adversary which has access to side information about the target sequence in the form of some values of SNVs from the target sequence. First, we assumed that the user encrypts his genomic sequence using a low-entropy, easily guessable password. In this case, we found that the adversary can easily exclude decoy passwords from the pool of possible passwords, and can guess the correct sequence with high probability by having access to 2.5\% sparse SNVs or 20\% or more consecutive SNVs from the target sequence. 

Second, we assumed that the user encrypts his sequence using a high-entropy password. In this case, since elimination of decoy passwords might not yield any sequence, we use the honey sequences to obtain as much information as possible from the target sequence, exploiting the similarity between the original sequence and the honey sequences \cite{huang_genoguard:_2015}. We then compared the sequence obtained from the honey sequences to state-of-the-art methods from genome sequence inference in order to observe the privacy leakage. Even with no side information available to the attacker, the sequence obtained from the honey sequences had a 2\% improvement on average over all tested baseline methods. With side information in the form of sparse SNVs from the target sequence, the improvement in accuracy compared to the baseline inference models raises to up to 15\% on average when 5\% of the target sequence is available to the attacker, predicting more than 82\% (on average) of the target sequence correctly. When the attacker obtains consecutive SNVs from the target sequence, the accuracy of the attacker decreases slightly from the previous case, yielding 73\% accuracy when 5\% of the target sequence is known, with an average improvement of 7\% over the baseline methods.

In conclusion, we argue that the research community should invest more resources toward the design of long-term encryption tools for genomic data. 
Overall, GenoGuard could be a viable solution when the user incorporates {\em all} side information into the encryption. 
However, given the fact that all this information needs to be stored together with the ciphertext, it also prompts the question of how much is a user willing to disclose, considering that only the baseline methods can predict, with high accuracy, the correct sequence (e.g. with 20\% sparse SNVs available to the attacker, her accuracy is, on average, over 82\%). 
Users who have already used GenoGuard for long-term encryption purposes need to be aware that if further genomic information can be obtained by the attacker, it will severely diminish the security of the system. 

As part of future work, we plan to analyze the security of GenoGuard for side information arising from kinship associations. 

\descr{Acknowledgments.} This work was supported by a Google Faculty Award on ``Enabling Progress in Genomic Research via Privacy Preserving Data Sharing,'' the European Union's Horizon 2020 Research and Innovation program under the Marie Sk\l{}odowska-Curie ``Privacy\&Us'' project (GA No. 675730), and the Swiss National Science Foundation (Grant 150654).

\balance
\bibliographystyle{abbrv}

\end{document}